\newtheorem{thm}{Theorem}
\newtheorem{definition}{Definition}
\newtheorem{assumption}{Assumption}
\begin{document}
\title{Dynamic phenotypes as criteria for model discrimination: fold-change
detection in \emph{R. sphaeroides}
chemotaxis}
\author{Abdullah Hamadeh, Brian Ingalls, Eduardo Sontag}
\date{}
\maketitle

\begin{abstract}
The chemotaxis pathway of the bacterium \emph{Rhodobacter
sphaeroides} has many similarities to the well-studied pathway in
\emph{Escherichia coli}.  It exhibits robust adaptation and has
several homologues of the latter's chemotaxis proteins.  Recent
theoretical results have been able to correctly predict that the
chemotactic response of \emph{Escherichia coli} exhibits the same
output behavior in response to scaled ligand inputs, a dynamic
property known as fold-change detection (FCD), or input-scale
invariance.  In this paper, we present theoretical assumptions on
the \emph{R. sphaeroides} chemotaxis sensing dynamics that can be
analytically shown to yield FCD behavior in a specific ligand
concentration range.  Based on these assumptions, we construct two
models of the full chemotaxis pathway that are able to reproduce
experimental time-series data from earlier studies. To test the
validity of our assumptions, we propose a series of experiments in
which our models predict robust FCD behavior where earlier models
do not.  In this way, we illustrate how a dynamic phenotype such
as FCD can be   used for the purposes of discriminating between
two models that reproduce the same experimental time-series data.
\end{abstract}

\section{Introduction}

Dynamic models of biological mechanisms are meaningful if they can
explain experimental data, make \emph{a priori} predictions of
biological behavior and be liable to invalidation through testing.

Although several competing models of a given mechanism can often
be made to reproduce experimental data through sufficient
parameterization and tuning, in many cases it is possible to
discriminate between such models by comparing the experimentally
observed output response and the simulated response to a
judiciously designed perturbation.   This paper is a study of the
use of a particular \emph{dynamic phenotype} for the purposes of
model discrimination. Dynamic phenotypes are distinctive,
qualitative, dynamic output responses that are robustly maintained
under a range of experimental conditions.

An example of a dynamic phenotype is adaptation, where a system
initially at steady-state reacts to input stimuli and then
restores its pre-stimulus equilibrium.  It has been shown that
integral control is the structural feature responsible for this
behavior \cite{Doyle}. Weber's law, whereby a system exhibits the
same maximal amplitude in its response to two different
inputs that are positive linear scalings of each other
\cite{SontagPNAS2010} is another example of a dynamic phenotype.

This paper deals with a third dynamic phenotype, termed fold
change detection (FCD) \cite{SontagPNAS2010}, or scale invariance.
A system is said to exhibit FCD if its output responses to two
different input stimuli that are positive linear scalings of
each other are identical (which makes this a stronger property
than Weber's law).

In a study \cite{SontagPNAS2010,SontagSIAM2011}, it was predicted
that the chemotaxis system of \emph{Escherichia coli}, modeled in
\cite{Shimizu2008}, would exhibit the FCD property, and these
predictions were later confirmed as accurate \cite{Shimizu2011}.
The key assumption of this model, which leads to FCD, is
the allosteric signaling structure of the methyl-accepting
chemotaxis protein receptors.


Although significantly more complex, the chemotaxis system of the bacterium \emph{Rhodobacter
sphaeroides} has many similarities to that of \emph{E.
coli}.  It features two, rather than one, sensory
clusters; one at the cell membrane and the other
in the cytoplasm.  Whilst the membrane cluster, as in \emph{E.
coli}, detects external ligand, it is as yet unknown
exactly what the cytoplasmic cluster senses
\cite{ArmitageNatureReview}.  Besides detecting
internalized ligand concentrations, it may also sense internal
signals, such as signals reporting the cell's metabolic state.  This bacterium also has multiple
homologues of the \emph{E. coli} chemotaxis proteins, which play roles similar to
those found in the latter, although the exact structure of their connectivity with the two sensory clusters and the flagellum is not known with certainty.  The CheA homologues transduce the receptor activity to the other
chemotaxis proteins through phosphotransfer, the CheR and CheB
homologues respectively methylate and demethylate receptors,
whilst the CheY proteins are believed to have a role in varying
the stopping frequency of the bacterium's single flagellum
\cite{Armitage_pnas_brake}.

Recent studies have used a model invalidation technique
to suggest possible connectivities for the CheY proteins
\cite{BMC} and the CheB proteins \cite{plos}.  However, upon simulation it
becomes evident that these models do not exhibit the FCD behavior observed in \emph{E. coli}.
This suggests the question: given the similarities between the two chemotaxis
pathways, does the \emph{R. sphaeroides} chemotaxis response show FCD as does that of \emph{E. coli}?

In this paper, we model the dynamics of the two \emph{R. sphaeroides} receptor clusters using the MWC allosteric
model \cite{MWC} that has been used to model the receptor activity in \emph{E. coli} in \cite{Berg,Keymer,Shimizu2008}.  We present a theorem that shows that if this is an accurate model of the receptor dynamics, then the receptor activities will exhibit FCD.  What is more, this observed behavior is robust to the connectivity between the chemotaxis proteins, the receptors and the flagellum.  To illustrate this point, we construct two models of the integrated \emph{R. sphaeroides} chemotaxis pathway based on our receptor dynamics assumptions, with each model featuring a different connectivity.  We show that, in addition to reproducing previously published experimental data, these models also display FCD in their flagellar responses in certain ligand concentration ranges.  Since flagellar outputs can be easily measured using tethered cell assays, we then suggest a series of experiments that can be used to test whether the models we present here are accurate compared to previously published models based on whether or not the flagellar response exhibits FCD.


This work therefore makes the case that qualitative dynamic behavior could be a
powerful property to test when discriminating between competing models.  A systematic way of model discrimination using this approach would start with the construction of a dynamic model that explains experimental data. The next step would be to use the model to mathematically identify experimentally implementable conditions under which the system can be expected to exhibit a certain dynamic phenotype.  The final step would be to experimentally implement those conditions and to compare the measured results against what is predicted \emph{in silico}.  In this way, two models which explain experimental data equally well can be discriminated using their dynamic phenotypes.

\subsection{Background} \label{Sec:background}

We can decompose the \emph{R. sphaeroides} chemotaxis pathway into three modules, as illustrated in Figure
\ref{Fig:rhodo}.  The sensing module includes two receptor clusters.  One of these resides at the cell membrane and senses the concentration of external ligands $L$, as illustrated in Figure \ref{Fig:rhodo_net}.   The other cluster resides within the cytoplasm and measures an internalized ligand concentration $\tilde{L}$.  Henceforth, the $\tilde{}$ notation will be used to denote signals associated with the cytoplasmic cluster.

\begin{figure}[h!]\begin{center}
\scalebox{0.6}{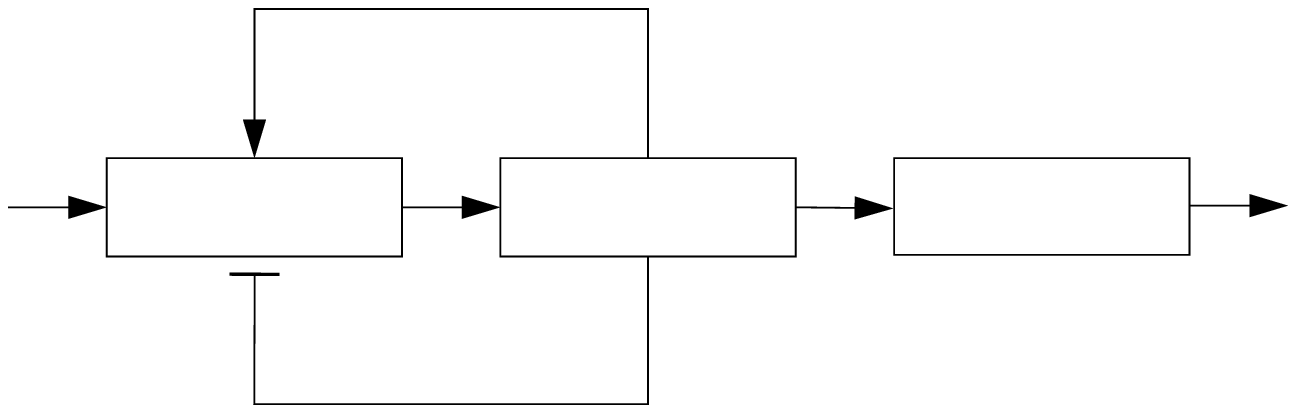} \caption{Schematic
of the \emph{R. sphaeroides} chemotaxis pathway.}
\label{Fig:rhodo}\end{center}\end{figure}

\begin{figure}[h!]\begin{center}
\scalebox{0.5}{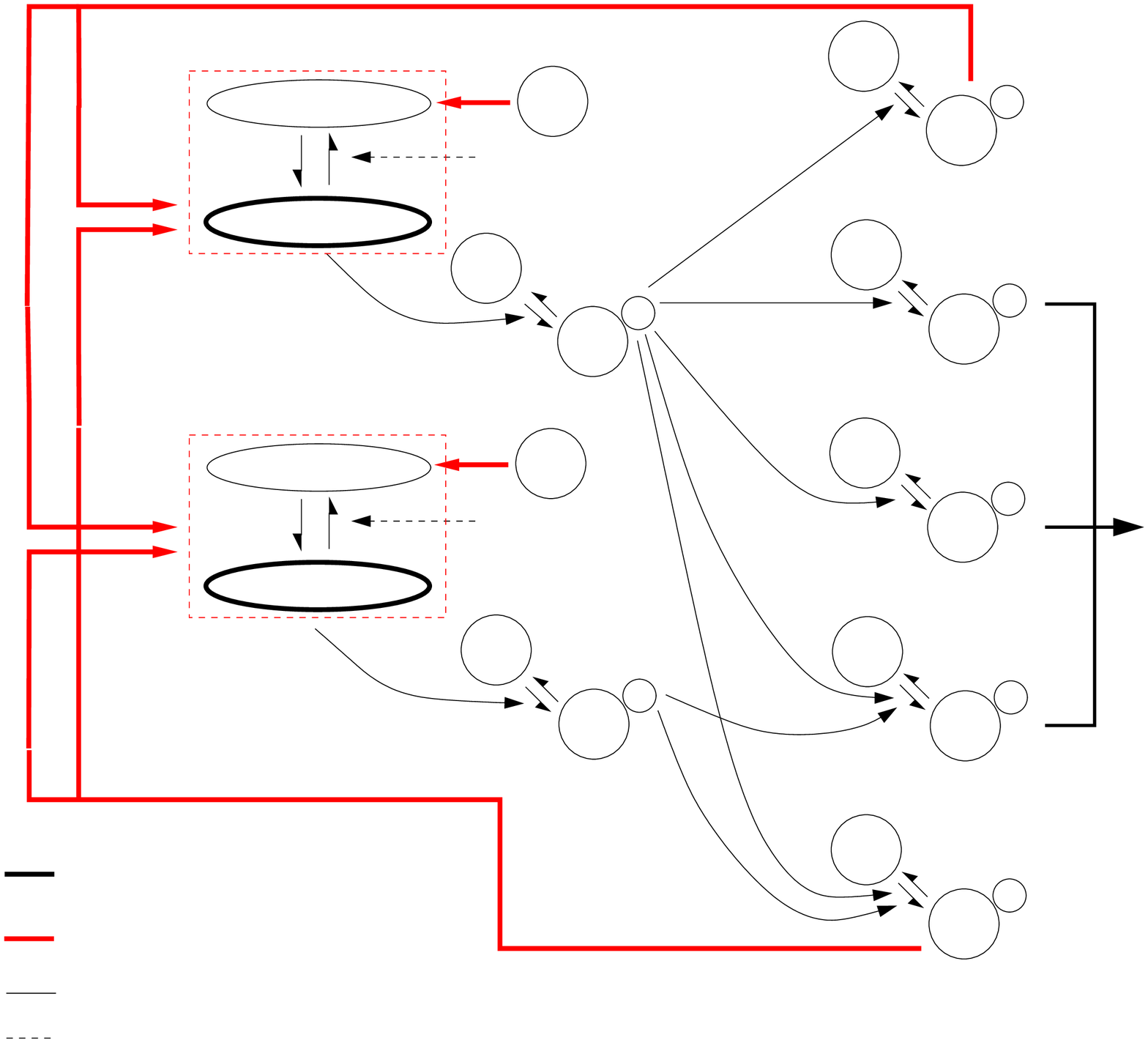} \caption{The \emph{R.
sphaeroides} signalling network.}
\label{Fig:rhodo_net}\end{center}\end{figure}

The dynamics of the two receptor clusters are modeled as two
first-order systems.  The membrane receptor cluster is assumed to
have state $m$ (its receptor methylation level) and output $a$
(the receptor activity level).  Similarly the cytoplasmic cluster
has methylation level $\tilde{m}$ as its state and its
activity level $\tilde{a}$ as its output.  The state-space
representation of this system is then
\begin{equation}\label{Eq:system_equations}\begin{array}{ll} \dot{m} & = F(a,w)\\
a & = G(m,L) \\
\dot{\tilde{m}} & = \tilde{F}(\tilde{a},\tilde{w})\\
\tilde{a} & = \tilde{G}(\tilde{m},\tilde{L}) \\
\end{array}\end{equation}
where $w,\tilde{w}$ are functions of the concentrations of the phorphorylated chemotaxis proteins within the cell.  These functions represent the interactions between the internal state of the cell and the receptors.  For example, $w$
and $\tilde{w}$ can represent the demethylation of the receptors
by the proteins CheB$_1$,CheB$_2$ or their methylation by the
proteins CheR$_2$,CheR$_3$.

The cytoplasmic cluster is believed
to integrate the extra-cellular ligand concentration $L$ with
internal cell signals.  We represent these internal cell signals by $u$, a function of the concentrations of the phorphorylated chemotaxis proteins. The signal $\tilde{L}$ in Figure
\ref{Fig:rhodo_net} is assumed to have the following relation with
the externally sensed ligand.
\begin{assumption}\label{Assump: input scaling}
The internalized ligand concentration $\tilde{L}$ is related to
the external ligand concentration $L$ through a linear, time
invariant filter
\[\begin{array}{l}\dot{\xi}=A\xi + B(u)L^{\nu}, \quad \xi \in \mathbb{R}^{n} \\  \tilde{L}=C\xi + D(u)L^{\nu}
\end{array}\] where $A \in \mathbb{R}^{n \times n}$, $B: \mathbb{R} \to \mathbb{R}^{n}$, $C\in \mathbb{R}^{1\times n}$, $D: \mathbb{R} \to \mathbb{R}$ and $\nu \in \mathbb{R}$.\end{assumption}

With Assumption \ref{Assump: input scaling}, the internalized
ligand concentration $\tilde{L}$ can represent a variety of
signals, including, for example, a static map that combines the
externally sensed ligands $L$ with the internal chemotaxis protein
signals $u$, or it can be a phase-delayed version of $L$ or even,
to allow for a degree of possible cooperativity, a power of $L$.

In the transduction sub-system of Figure \ref{Fig:rhodo}, auto-phosphorylation of the chemotaxis protein CheA$_2$
is accelerated by the membrane cluster activity, whilst that of
the CheA$_{3}$A$_{4}$ complex is catalyzed by cytoplasmic cluster
activity (as shown in Figure \ref{Fig:rhodo_net}).  The proteins
CheY$_3$, CheY$_4$, CheY$_6$, CheB$_1$ and CheB$_2$ all compete
for phosphoryl groups from CheA$_{2}$, whilst CheB$_{2}$ and
CheY$_6$ do so from CheA$_{3}$A$_{4}$.  The reaction rates for all
of these phosphorylations are given in \cite{Porter,BMC}.  We
represent this phosphotransfer network as a general nonlinear
system, with state vector \[\mathbf{x}=\left[\begin{array}{ccccccc}A_{2_{p}} &Y_{3_{p}}
&Y_{4_{p}} &(A_{3}A_{4})_{p} &Y_{6_{p}}  & B_{1_{p}} & B_{2_{p}}
\end{array}\right]^{T}\]
the individual states being the concentrations of the phosphorylated chemotaxis proteins.
The transduction system takes as its inputs
the receptor activities $a,\tilde{a}$:
\begin{equation}\label{Eq: signal transduction}\begin{array}{l}\dot{\mathbf{x}}=H(\mathbf{x},a,\tilde{a})\end{array}\end{equation}
where $H(\mathbf{x},a,\tilde{a})$ is given by the ODEs (9)-(15) in \cite{BMC}.

%
%

The outputs of this system are signals
$w(\mathbf{x}),\tilde{w}(\mathbf{x}),u(\mathbf{x},a)$, which feed
back into the sensing subsystem, as described above.  The interconnection of the phosphotransfer network (\ref{Eq: signal transduction}) with the receptor dynamics is illustrated in Figure \ref{Fig:upstream_downstream}, and the interconnection between the two subsystems can thus be written as \begin{equation}\label{Eq:combined odes}\begin{array}{l}\dot{m}=F(a,w(\mathbf{x})), \; a=G(m,L)\\
\dot{\tilde{m}}=\tilde{F}(\tilde{a},\tilde{w}(\mathbf{x})),  \; \tilde{a}=\tilde{G}(\tilde{m},\tilde{L})\\
\dot{\xi}=A\xi + B(u(\mathbf{x},a))L^{\nu}, \; \tilde{L}=C\xi + D(u)L^{\nu}  \\
\dot{\mathbf{x}}=H(\mathbf{x},a,\tilde{a}) \\
\end{array}
\end{equation}

\begin{figure}[h!]\begin{center}
\scalebox{0.7}{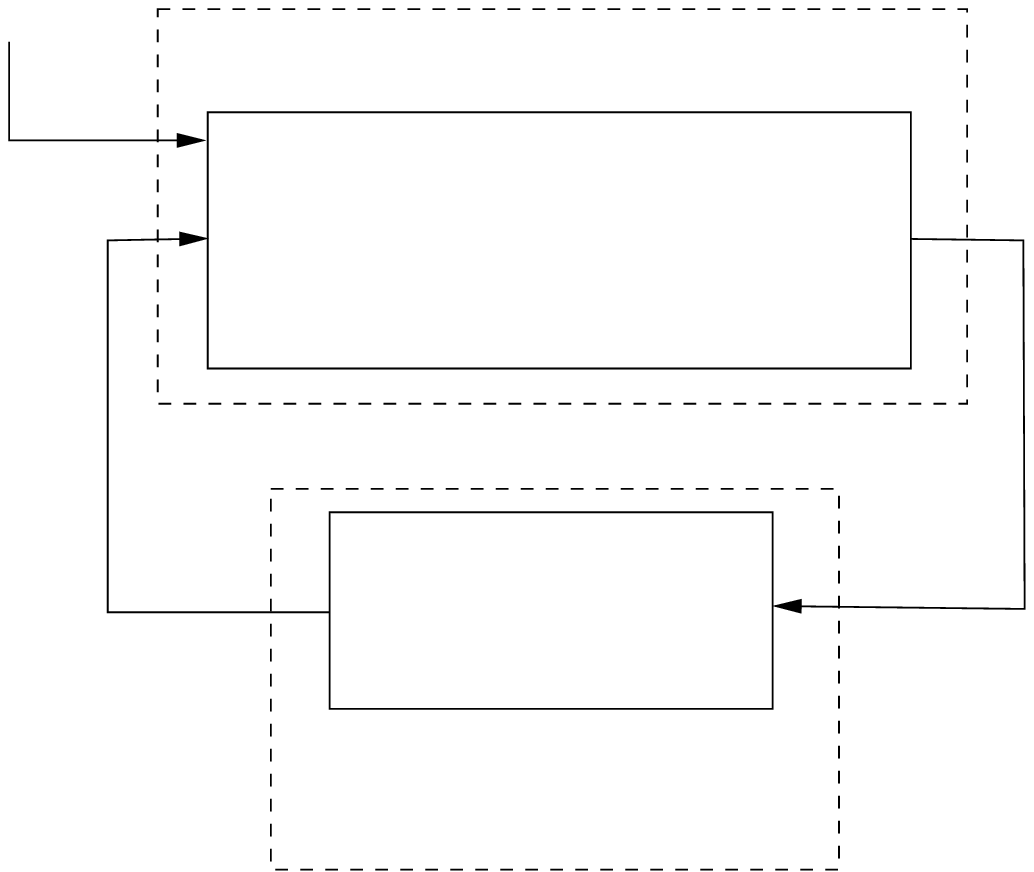} \caption{The
interconnection of the receptors' sensing dynamics with the
phosphotransfer network.}
\label{Fig:upstream_downstream}\end{center}\end{figure}

As shown in Figure \ref{Fig:rhodo_net}, the protein CheY$_6$-P,
possibly acting together with one or both of CheY$_3$-P and
CheY$_4$-P, is believed to bind with the flagellar motor proteins to
inhibit the flagellar rotation rate \cite{Armitage_pnas_brake}
(thus effectively coupling the signal transduction system to the
actuation system), though the precise mechanism through which this
is achieved is unknown.  An additional uncertainty lies in the
demethylation connectivity between the CheB proteins and the two
receptor clusters, though these questions have been the subjects
of several studies, \cite{BMC,plos,tindall}. Although how the CheB
and CheY proteins interact with the sensing and actuation modules
is not known for certain, it will be shown later in the paper that
under some mild assumptions, FCD can be exhibited by the bacterium
regardless of the exact structure of these connectivities.

\subsection{An MWC model of receptor dynamics}\label{Sec:Assumptions}
%
%
%
%

We employ an MWC-type allosteric model
for the receptor activities \cite{MWC}.  Such models have been proposed for
several bacterial chemotactic systems and have been found to be
consistent with experimental data \cite{Berg, Keymer,claus,Shimizu2008}.  The main assumptions of the model are that receptors are either active or inactive, and that ligands have a higher affinity for inactive receptors than for active receptors. Respectively, we denote by $a(t)$ and $\tilde{a}(t)$ the probabilities at time $t$ of a transmembrane and cytoplasmic receptor being active. For each receptor, this probability can be
approximated by the ratio of the Boltzmann factor of the active
state to the sum of the Boltzmann factors of all the states.
Therefore if, at time $t$, the free-energy state of the membrane
receptors is $E_{A}$ when active and $E_{I}$ when inactive, then
the activity of the membrane receptors is approximated by
\begin{equation}\label{Eq: act mem}a(t)= \frac{\exp(-E_{A})}{\exp(-E_{I})+\exp(-E_{A})}  = \frac{1}{1+\exp[-E_{\Delta}]}  \end{equation}
where $E_{\Delta}=E_{I} - E_{A}$ is the free energy difference
between the active and inactive states.

Similarly for the cytoplasmic receptors, the activity
$\tilde{a}(t)$ is dependent on their free-energy states when
active and inactive, respectively $\tilde{E}_{A}$ and
$\tilde{E}_{I}$:
\begin{equation}\label{Eq: act cyt}\tilde{a}(t)= \frac{\exp(-\tilde{E}_{A})}{\exp(-\tilde{E}_{I})+\exp(-\tilde{E}_{A})}  =   \frac{1}{1+\exp [ - \tilde{E}_{\Delta} ]}  \end{equation}
with $\tilde{E}_{\Delta}=\tilde{E}_{I} - \tilde{E}_{A}$. The
functions $E_{\Delta}$ and  $\tilde{E}_{\Delta}$ are assumed to
have the same structure and take the form
$E_{\Delta}=-[g_{m}(m)+g_{L}(L)]$ and
$\tilde{E}_{\Delta}=-[\tilde{g}_{m}(\tilde{m})+\tilde{g}_{L}(\tilde{L})]$
The functions $g_{m},\tilde{g}_{m}$ are dependent on the
methylation state of their respective receptors whilst the functions $g_{L},\tilde{g}_{L}$ quantify the effect of ligand
binding on the receptor-free energy difference of the receptors.  Following
\cite{Shimizu2008,claus,Shimizu2011} we make the assumption that each of
$g_{m},\tilde{g}_{m}$ is affinely dependent on the methylation
state of its respective receptor cluster:
\[g_{m}(m)=\alpha(m_{0}-m) \text{ and }
\tilde{g}_{m}(\tilde{m})=\tilde{\alpha}(\tilde{m}_{0}-\tilde{m})
\] where $\alpha=\tilde{\alpha}=2$ and $m_{0}=\tilde{m}_{0}=5$.

The binding of ligands to receptors leads to a loss of ligand translational entropy, proportional to the logarithm of the free ligand concentration \cite{Keymer,Shimizu2008}.  Due to the greater affinity of ligands to inactive receptors, this loss is greater in the case of ligands binding to active receptors.  We denote the dissociation constants between ligands and active transmembrane (cytoplasmic) receptors by $K_{A}$ ($\tilde{K}_{A}$), and between ligands and inactive transmembrane (cytoplasmic) receptors by $K_{I}$ ($\tilde{K}_{I}$), with $K_{A} \gg K_{I}$ and $\tilde{K}_{A} \gg \tilde{K}_{I}$ due to the different affinities.  From the \emph{E. coli}
chemotaxis literature, we adopt the values $K_{I}=\tilde{K}_{I}=18
\mu M$, $K_{A}=\tilde{K}_{A}=3 mM$.   As in \cite{Keymer}, the change in receptor free energies due to ligand binding to active transmembrane and cytoplasmic receptors is then, respectively, $-\ln(\frac{L}{K_{A}})$ and $-\ln(\frac{\tilde{L}}{\tilde{K}_{A}})$.  On the other hand the change in receptor free energy due to ligand binding to inactive transmembrane and cytoplasmic receptors is, respectively, $-\ln(\frac{L}{K_{I}})$ and $-\ln(\frac{\tilde{L}}{\tilde{K}_{I}})$.  The effect of this on the free energy differences $E_{\Delta}$, $\tilde{E}_{\Delta}$ between active and inactive receptors can be characterized, as in
\cite{Keymer,Shimizu2008}, as
\[\begin{split}&g_{L}(L)=\ln\left(1+\frac{L}{K_{I}}\right)-\ln\left(1+\frac{L}{K_{A}}\right)
\text{ and } \\ &
\tilde{g}_{L}(\tilde{L})=\ln\left(1+\frac{\tilde{L}}{\tilde{K}_{I}}\right)-\ln\left(1+\frac{\tilde{L}}{\tilde{K}_{A}}\right)\end{split}\]
for each cluster respectively.  Due to the differences in affinities, we note that $g_{L}(L)$ and $g_{L}(\tilde{L})$ are increasing functions of $L$ and $\tilde{L}$ respectively, which means that $a(t)$ and $\tilde{a}(t)$ are decreasing functions of $L$ and $\tilde{L}$ respectively.  The greater affinity of ligands for inactive receptors therefore has the effect of shifting the receptors towards the inactive state.

Note that in the ligand concentration range $K_{I} \ll L \ll
K_{A}$ and $\tilde{K}_{I} \ll \tilde{L} \ll \tilde{K}_{A}$, the
receptor activities can be approximated by
\begin{equation}\label{Eq:approx_act}\begin{split} & a=\frac{1}{1+\left[\exp(\alpha[m_{0}- m])\frac{L}{K_{I}}\right]} \text{ and } \\ & \tilde{a}=\frac{1}{1+\left[\exp(\tilde{\alpha} [\tilde{m}_{0}-\tilde{m}])\frac{\tilde{L}}{\tilde{K}_{I}}\right]} \end{split}\end{equation}

\section{Main results}

Following similar definitions in the literature,
\cite{SontagPNAS2010,SontagSIAM2011} we give the following
definition of fold-change detection for the \emph{R. sphaeroides}
chemotaxis pathway.

\begin{definition}\label{Def: FCD} The \emph{R. sphaeroides} chemotaxis system (\ref{Eq:system_equations}) exhibits fold change detection (FCD) in response to a sensed ligand input signal $L(t)$ if its receptor activities $a(t),\tilde{a}(t)$, initially at a steady state corresponding to $L(0)$, are independent of linear scalings $p>0$ of the input $L(t)$.\end{definition}

Note that the chemotaxis protein phosphorylation network (\ref{Eq: signal transduction}) takes as its
sole inputs the signals $a$ and $\tilde{a}$.  For this reason, Definition \ref{Def: FCD} implies that if the system
(\ref{Eq:combined odes}) exhibits FCD in its activities, it
also exhibits FCD in the concentration of its phosphorylated
chemotaxis proteins (the elements of the vector $\mathbf{x}$).  The
bacterium's flagellar behavior would also be expected to exhibit
FCD as the flagellum rotation rate is a function of the
phosphorylated CheY$_{3}$, CheY$_{4}$ and CheY$_{6}$
concentrations.  Before giving the main result, we make the
following assumption on the chemotaxis system dynamics.

\begin{assumption}\label{Assump: unique fixed point}
The system (\ref{Eq:combined odes}) has a unique steady state for
any given $L$.\end{assumption}

\begin{thm}\label{Thm: FCD} Under Assumptions \ref{Assump: input scaling} and \ref{Assump: unique fixed point}, and under approximation (\ref{Eq:approx_act})
the chemotaxis system (\ref{Eq:combined odes}), with steady state
initial conditions, will exhibit FCD in its activities
$a,\tilde{a}$ for ligand inputs in the range $K_{I} \ll L \ll
K_{A}$ and $\tilde{K}_{I} \ll \tilde{L} \ll \tilde{K}_{A}$, in the
sense of Definition \ref{Def: FCD}.
\end{thm}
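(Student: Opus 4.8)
The plan is to exhibit an explicit symmetry of the closed-loop dynamics (\ref{Eq:combined odes}) that simultaneously scales the ligand input by $p$, leaves both activities $a,\tilde a$ pointwise invariant, and carries the steady state for $L(0)$ to the steady state for $pL(0)$; fold-change detection then follows from uniqueness of trajectories. The starting observation is that, under approximation (\ref{Eq:approx_act}), the membrane activity depends on $(m,L)$ only through the product $e^{-\alpha m}L$ and the cytoplasmic activity depends on $(\tilde m,\tilde L)$ only through $e^{-\tilde\alpha\tilde m}\tilde L$. This scale-invariant, log-sensing structure dictates the construction.

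For fixed $p>0$ I would define the state map $\Phi_p$ by
\[
m\mapsto m+\tfrac{1}{\alpha}\ln p,\qquad \tilde m\mapsto \tilde m+\tfrac{\nu}{\tilde\alpha}\ln p,\qquad \xi\mapsto p^{\nu}\xi,\qquad \mathbf{x}\mapsto\mathbf{x},
\]
and pair it with the input scaling $L\mapsto pL$. By construction $a$ is invariant, since the shift in $m$ exactly cancels the factor $p$ multiplying $L$; and once $\tilde L$ is shown to scale as $p^{\nu}$, the shift in $\tilde m$ cancels it so that $\tilde a$ is invariant as well.

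Next I would verify that $\Phi_p$ maps each solution of (\ref{Eq:combined odes}) driven by $L(t)$ to a solution driven by $pL(t)$. Because $a,\tilde a$ and $\mathbf{x}$ are unchanged, the feedback signals $w(\mathbf{x}),\tilde w(\mathbf{x}),u(\mathbf{x},a)$ and the phosphotransfer dynamics $\dot{\mathbf{x}}=H(\mathbf{x},a,\tilde a)$ are untouched, and the two methylation equations are preserved because their right-hand sides are unchanged while their states differ only by constants. The crucial computation is the filter of Assumption \ref{Assump: input scaling}: since $u$ is invariant, $B(u)$ and $D(u)$ are unchanged, so the map $L^{\nu}\mapsto\tilde L$ is linear and hence homogeneous of degree one; replacing $L$ by $pL$ multiplies $L^{\nu}$ by $p^{\nu}$, and the substitution $\xi\mapsto p^{\nu}\xi$ renders both $\dot\xi=A\xi+B(u)L^{\nu}$ and $\tilde L=C\xi+D(u)L^{\nu}$ consistent, yielding exactly $\tilde L\mapsto p^{\nu}\tilde L$. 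I expect this linear-filter step to be the main obstacle, as it is where the linearity of Assumption \ref{Assump: input scaling} is indispensable and where the exponent $\nu$ must be threaded correctly from the input, through $\xi$, into the cytoplasmic activity via the matching methylation shift $\tfrac{\nu}{\tilde\alpha}\ln p$.

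Finally I would close the argument with the initial data. Applying $\Phi_p$ to a constant-$L(0)$ equilibrium produces, by the equivariance just established, an equilibrium of the system driven by the constant input $pL(0)$; Assumption \ref{Assump: unique fixed point} then identifies it as \emph{the} steady state for $pL(0)$, which is precisely the prescribed initial condition of the scaled experiment. Uniqueness of solutions of (\ref{Eq:combined odes}) forces the scaled trajectory to coincide with $\Phi_p$ applied to the unscaled one, and since $\Phi_p$ fixes $a$ and $\tilde a$ pointwise, the activity outputs are identical for every $p>0$, which is FCD in the sense of Definition \ref{Def: FCD}. I would remark that the range conditions $K_I\ll L\ll K_A$ and $\tilde K_I\ll\tilde L\ll\tilde K_A$ enter only to justify approximation (\ref{Eq:approx_act}) for both the original and the scaled inputs, so the conclusion is asserted exactly on the stated regime.
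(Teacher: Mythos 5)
Your proposal is correct and follows essentially the same route as the paper: the map $\Phi_p$ you define (shifting $m$ by $\tfrac{1}{\alpha}\ln p$ and $\tilde m$ by $\tfrac{\nu}{\tilde\alpha}\ln p$, scaling $\xi$ by $p^{\nu}$, fixing $\mathbf{x}$) is exactly the equivariance the paper constructs, and your verification of output invariance, the filter homogeneity, and the matching of steady-state initial conditions via Assumption \ref{Assump: unique fixed point} mirrors the paper's argument step for step. Your explicit appeal to uniqueness of solutions to conclude that the scaled trajectory coincides with $\Phi_p$ applied to the unscaled one is a point the paper leaves implicit, but it is the same proof.
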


\begin{proof}
In the following, we assume that all ligand
concentrations lie in the ranges $K_{I} \ll L \ll K_{A}$ and
$\tilde{K}_{I} \ll \tilde{L} \ll \tilde{K}_{A}$, and therefore
approximation (\ref{Eq:approx_act}) holds.  The proof is based on
the existence of equivariances \cite{SontagSIAM2011}.

Suppose that in response to an external ligand input signal
$L=L_{1}(t)$, the system (\ref{Eq:combined odes}), initially at a
steady state corresponding to $L=L_{1}(0)$, exhibits a solution
\[\left[\begin{array}{c}m \\ \tilde{m} \\ \textbf{x} \\ \xi
\end{array}\right]=\left[\begin{array}{c}m_{1}(t) \\ \tilde{m}_{1}(t) \\ \textbf{x}_{1}(t) \\ \xi_{1}(t)
\end{array}\right]=\mathbf{m}_{1}(t)\] and outputs $a_{1}(t)=G(m_{1}(t),L_{1}(t))$, $\tilde{L}_{1}(t)= C\xi_{1}+D(u(\mathbf{x}_{1},a_{1}))L_{1}^{\nu}$, $\tilde{a}_{1}=\tilde{G}(\tilde{m}_{1}(t),\tilde{L}_{1}(t))$.
Now if the ligand input is scaled to $L=L_{2}(t)=pL_{1}(t)$, where
$p > 0$, and if the initial state corresponds to $L=L_{2}(0)$, then \begin{equation}\label{Eq:other soln}\mathbf{m}_{2}(t)=\left[\begin{array}{c}m_{2}(t) \\
\tilde{m}_{2}(t)
\\ \textbf{x}_{2}(t) \\ \xi_{2}(t)
\end{array}\right]=\left[\begin{array}{c}m_{1}(t) + \frac{1}{\alpha} \log p \\ \tilde{m}_{1}(t) +
\frac{1}{\tilde{\alpha}} \log p^{\nu} \\ \textbf{x}_{1}(t) \\
p^{\nu}\xi_{1}(t) \end{array}\right]\end{equation} is a solution
of (\ref{Eq:combined odes}) since, under approximation
(\ref{Eq:approx_act}), the outputs are then
\[\begin{array}{l}

\begin{array}{ll}a_{2} & =G(m_{2},L_{2}) \\ & =G(m_{1}+\frac{1}{\alpha}\log
p,pL_{1})=G(m_{1},L_{1})=a_{1}\end{array} \\

\vspace{1mm}

\\

\begin{array}{ll} \tilde{L}_{2} & =C\xi_{2}+D(u(\mathbf{x}_{2},a_{2}))L_{2}^{\nu} \\
& =p^{\nu}C\xi_{1}+D(u(\mathbf{x}_{1},a_{1}))p^{\nu}L_{1}^{\nu}=p^{\nu}\tilde{L}_{1}\end{array} \\

\vspace{1mm}

\\

\begin{array}{ll} \tilde{a}_{2}& =\tilde{G}(\tilde{m}_{2},\tilde{L}_{2}) \\ & =\tilde{G}(\tilde{m}_{1}+\frac{1}{\tilde{\alpha}}\log
p^{\nu},p^{\nu}\tilde{L}_{1})=\tilde{G}(\tilde{m}_{1},\tilde{L}_{1})=\tilde{a}_{1}
\end{array}
\end{array}
\]
which means that
\[\begin{split}\hspace{-1mm}\frac{d}{dt}\hspace{-1mm}\left[\hspace{-1mm}\begin{array}{c}m_{2} \\ \tilde{m}_{2} \\ \textbf{x}_{2} \\ \xi_{2}
\end{array}\hspace{-1mm}\right]\hspace{-1mm}  & =\hspace{-1mm} \frac{d}{dt}\hspace{-1mm}\left[\hspace{-1mm}\begin{array}{c}m_{1}(t) + \frac{1}{\alpha} \log p \\ \tilde{m}_{1}(t) +
\frac{1}{\tilde{\alpha}} \log p^{\nu} \\ \textbf{x}_{1}(t) \\
p^{\nu}\xi_{1}(t)
\end{array}\hspace{-1mm}\right]\hspace{-1mm} \\ & \hspace{-1mm}=\left[\hspace{-1mm}\begin{array}{c}F(a_{1},w(\mathbf{x}_{1})) \\
\tilde{F}(\tilde{a}_{1},\tilde{w}(\mathbf{x}_{1}))
 \\ H(\textbf{x}_{1},a_{1},\tilde{a}_{1}) \\
p^{\nu}[A\xi_{1} + B(u(\mathbf{x}_{1},a_{1}))L_{1}^{\nu}]
\end{array}\hspace{-1mm}\right] \\ & \hspace{-1mm}=\left[\hspace{-1mm}\begin{array}{c}F(a_{2},w(\mathbf{x}_{2})) \\ \tilde{F}(\tilde{a}_{2},\tilde{w}(\mathbf{x}_{2}))
 \\ H(\textbf{x}_{2},a_{2},\tilde{a}_{2}) \\
A\xi_{2} + B(u(\mathbf{x}_{2},a_{2}))L_{2}^{\nu}
\end{array}\hspace{-1mm}\right]\end{split}\]
which verifies the claim that (\ref{Eq:other soln}) is a solution
of (\ref{Eq:combined odes}) when the ligand input is
$L(t)=pL_{1}(t)$. Since the scaled inputs $L=L_{1}(t)$ and
$L=L_{2}(t)=pL_{1(t)}$ yield the respective output pairs
$a_{1},\tilde{a}_{1}$ and $a_{2},\tilde{a}_{2}$ and since
$a_{1}=a_{2}$ and $\tilde{a}_{1}=\tilde{a}_{2}$, it follows that
system (\ref{Eq:combined odes}) under Assumption \ref{Assump:
input scaling} exhibits fold change detection if the initial
conditions of the system are $\mathbf{m}_{1}(0)$ when $L=L_{1}$
and $\mathbf{m}_{2}(0)$ when $L=L_{2}(t)=pL_{1}(t)$.  In the
language of \cite{SontagSIAM2011}, we have proved that the mapping
$\mathbf{m}_{1} \mapsto \mathbf{m}_{2}$ is an equivariance
associated to scalar symmetries on inputs.

Now if the system has a unique fixed point for any given $L$, and
if $\mathbf{m}_{1}(0)$ is the fixed point when $L=L_{1}(0)$, then
$\mathbf{m}_{2}(0)$ is the fixed point when $L=L_{2}(0)=pL_{1}(0)$
since, if $\dot{\mathbf{m}}_{1}=\mathbf{0}$ when $L=L_{1}(0)$ then
$\dot{\mathbf{m}}_{2}=\mathbf{0}$ when $L=L_{2}(0)=pL_{1}(0)$.
Therefore if system (\ref{Eq:combined odes}) has a unique fixed
point for any given $L$, starts from steady state conditions and
yields solution $\mathbf{m}_{1}(t)$, then scaling $L$ by $p>0$ and
initiating the system from steady state conditions will cause the
system to yield the solution $\mathbf{m}_{2}(t)$ and thereby
exhibit FCD.
\end{proof}

\section{Two \emph{R. sphaeroides} chemotaxis models} \label{Sec: models}

There are several integrated \emph{R. sphaeroides} chemotaxis pathway models in the literature \cite{plos,BMC,tindall}.  In this section, we present two new models, differing from the previous ones in that their receptor dynamics are of the form (\ref{Eq:combined odes}) and satisfy the MWC model given in Section \ref{Sec:Assumptions}.  Both of the new models we present were fitted to experimental data available in \cite{plos} and were able to reproduce the gene deletion data in \cite{BMC,plos}.

Each of the models presented satisfies the assumptions of Section \ref{Sec:Assumptions} and thereby exhibits FCD in the ligand range
$K_{I} \ll L \ll K_{A}$ and $\tilde{K}_{I} \ll \tilde{L} \ll \tilde{K}_{A}$. The demethylating feedback structure for the models is restricted to that in \cite{plos}, which proposed an asymmetric feedback structure wherein CheB$_1$ demethylates both clusters and CheB$_{2}$ demethylates the cytoplasmic cluster, although a model with any feedback connectivity is capable of exhibiting FCD under the assumptions we make.

The structural differences between the models lie in the signal $\tilde{L}$, which captures how external ligands are transduced to the cytoplasmic cluster. These models illustrate the point that, despite the differences in their internal connectivities, FCD behavior is conserved under the assumptions above.

Following \cite{BarkaiLeibler1997,plos}, we make the assumptions that
CheB proteins demethylate active receptors, whilst CheR proteins
methylate inactive receptors, and that CheR proteins operate at
saturation.  The CheR$_{2}$ and CheR$_{3}$ protein concentrations
are therefore assumed to be constant and normalized to $1 \mu M$
each. Denoting by $R_{2},R_{3}$ the concentrations of CheR$_{2}$ and CheR$_{3}$ and by $B_{1_{p}},B_{2_{p}}$ the concentrations of phosphorylated chemotaxis proteins CheB$_{1}$, CheB$_{2}$, mass action kinetics give the following general form for $F, \tilde{F}$ in (\ref{Eq:combined odes})
\begin{equation}\label{Eq: methylation}\begin{array}{ll}\dot{m}=F(a,\tilde{a},w(\mathbf{x}))=k_{R}(1-a)R_{2}-k_{B_{1}}B_{1_{p}}a-k_{B_{2}}B_{2_{p}}a \\
\dot{\tilde{m}}=\tilde{F}(a,\tilde{a},\tilde{w}(\mathbf{x}))=\tilde{k}_{R}(1-\tilde{a})R_{3}-\tilde{k}_{B_{2}}B_{2_{p}}\tilde{a}\end{array}
\end{equation} where $k_{R}, \tilde{k}_{R}>0$ are methylation and $k_{B_{1}}, k_{B_{2}},\tilde{k}_{B_{2}}>0$ demethylation
rate constants. The probabilities of activity $a,\tilde{a}$ given by (\ref{Eq: act mem}), (\ref{Eq: act cyt}).  The models were
obtained by fitting the constants $k_{R}, \tilde{k}_{R},
k_{B_{1}},k_{B_{2}},\tilde{k}_{B_{2}}$ in (\ref{Eq: methylation}).

The experimentally measured output which was used to fit the model
is the flagellar rotation frequency $f$.   As shown in Figure
\ref{Fig:rhodo_net}, the CheY proteins control the rotation of the
flagellum, and this is believed to happen through inhibitory
binding \cite{BMC}.  The measured rotation frequencies to which we fit our
models varied between 0 Hz and a maximum of approximately 8 Hz. As
discussed in \cite{plos}, this maximum was very rarely exceeded,
and is therefore assumed to be a physical limit on how fast the
flagellum can rotate. As such, the rotation frequency is modeled
as the Hill function
\[f=-\frac{1}{0.125 + \phi(Y_{3_{p}},Y_{4_{p}},Y_{6_{p}})^{4}}\]
where
\[\phi(Y_{3_{p}},Y_{4_{p}},Y_{6_{p}})=0.012Y_{6_{p}}\frac{Y_{3_{p}}+Y_{4_{p}}}{0.1+Y_{3_{p}}+Y_{4_{p}}}\]
(the negative sign denotes anti-clockwise rotation).  In this way,
$f$ varies between 0 - 8 Hz, and decreases with increased
concentrations of phosphorylated CheY proteins.

\subsubsection{Model I}\label{Subsec:model I}


\noindent \textbf{Model structure:}  In this model the cytoplasmic
receptors are assumed to sense internalized ligands, the
concentrations of which are dependent on the external ligand
concentration $L$.  At the same time, as in \cite{plos}, we assume there to be some
interaction between the chemotaxis proteins CheY$_3$, CheY$_4$ and
the cytoplasmic cluster, and the function $\tilde{g}_{L}(\tilde{L})$ takes as its input  $\tilde{L}=\frac{10L}{10+Y_{3_{p}}+Y_{4_{p}}}$.  A schematic of this model is shown in Figure \ref{Fig:model_I_schematic}.



A simulation of the model together with the tethered cell trace to
which the model was fitted is shown in Figure
\ref{Fig:model_I_assay}.  For comparison, Figure
\ref{Fig:model_I_assay} additionally shows a simulation (with the
same ligand input) of the model suggested in \cite{plos}, which
was fitted to the same tethered cell assay.  The root mean squared
error between the output of Model I and the tethered cell assay is
0.88, which compares favorably to the corresponding error for the
model in \cite{plos}, which is 1.27.

This model would be expected to exhibit FCD in the ligand ranges
$K_{I} \ll L \ll K_{A} $ and $\tilde{K}_{I} \ll \tilde{L} \ll
\tilde{K}_{A}$. The latter range is equivalent to
\[\tilde{K}_{I}\left(1+0.1[Y_{3_{p}}+Y_{4_{p}}]\right) \ll L \ll
\tilde{K}_{A}\left(1+0.1[Y_{3_{p}}+Y_{4_{p}}]\right)\] and since
the total amounts of intracellular CheY$_{3}$ and CheY$_{4}$
(phosphorylated and un-phosphorylated) are $3.2 \mu$M and $13.2
\mu$M respectively, then according to this model, simulations should show FCD in the range $ 2.64 \tilde{K}_{I} \ll  L \ll
\tilde{K}_{A}$.  Figure \ref{Fig:model_I_fcd} shows that this is
indeed the case, with similar output traces obtained for the step
changes in $L$ from $L=1000 \mu$M to 200 $\mu$M and from $L=500
\mu$ M to $100 \mu$M.

%

\noindent \textbf{Model parameters:}
 $k_{R}=\tilde{k}_{R}=0.0045 \quad
k_{B_{1}}=\tilde{k}_{B_{2}}=2.116 \quad k_{B_{2}}=2.822$.

\begin{figure}[h!]\begin{center}
\scalebox{0.5}{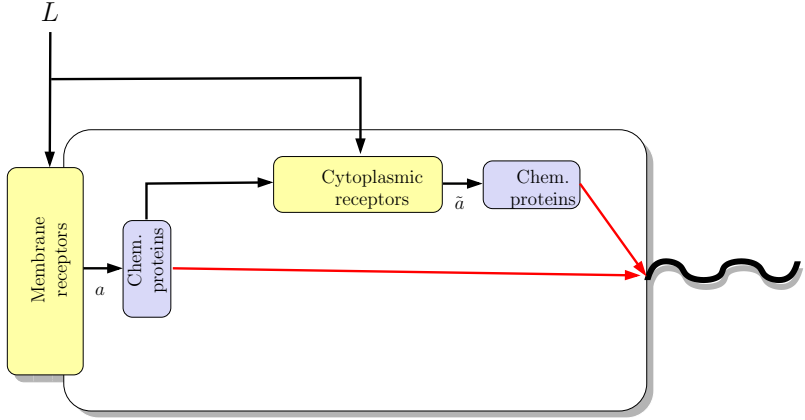}
\caption{Schematic of Model I.}
\label{Fig:model_I_schematic}\end{center}\end{figure}

\begin{figure}[h!]\begin{center}\includegraphics[width=0.7\textwidth]{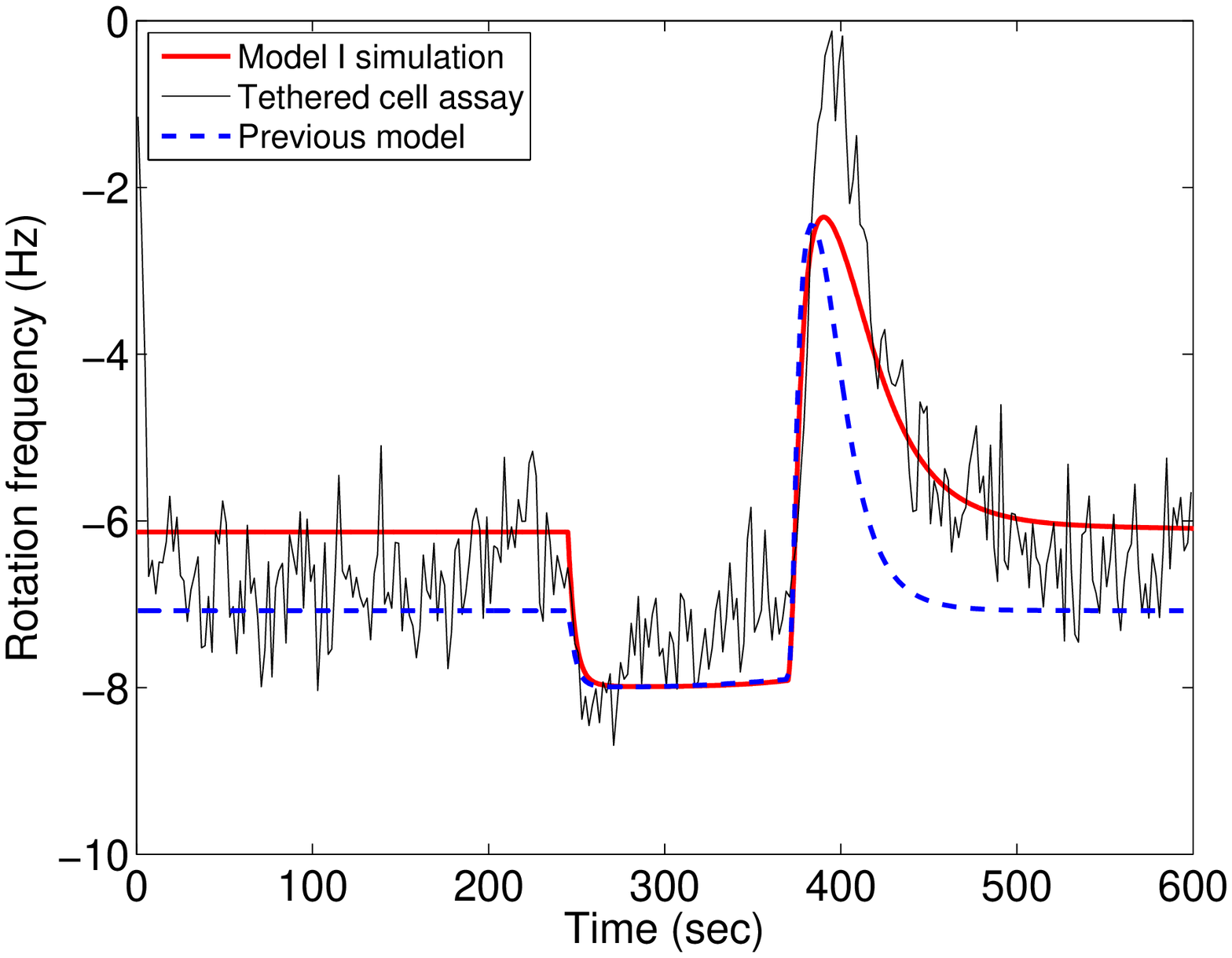}
\caption{Simulation of Model I (red) in response to a step rise
(at 245 seconds) and fall (at 370 seconds) in the ligand level $L$
from $L=0$ to $L=100$ and back to $L=0$, with a tethered cell
assay (black).  The dashed blue trace is a simulation of the
previously published model in \cite{plos} subject to the same
ligand input.}\label{Fig:model_I_assay}\end{center}\end{figure}

\begin{figure}[h!]\begin{center}\includegraphics[width=0.7\textwidth]{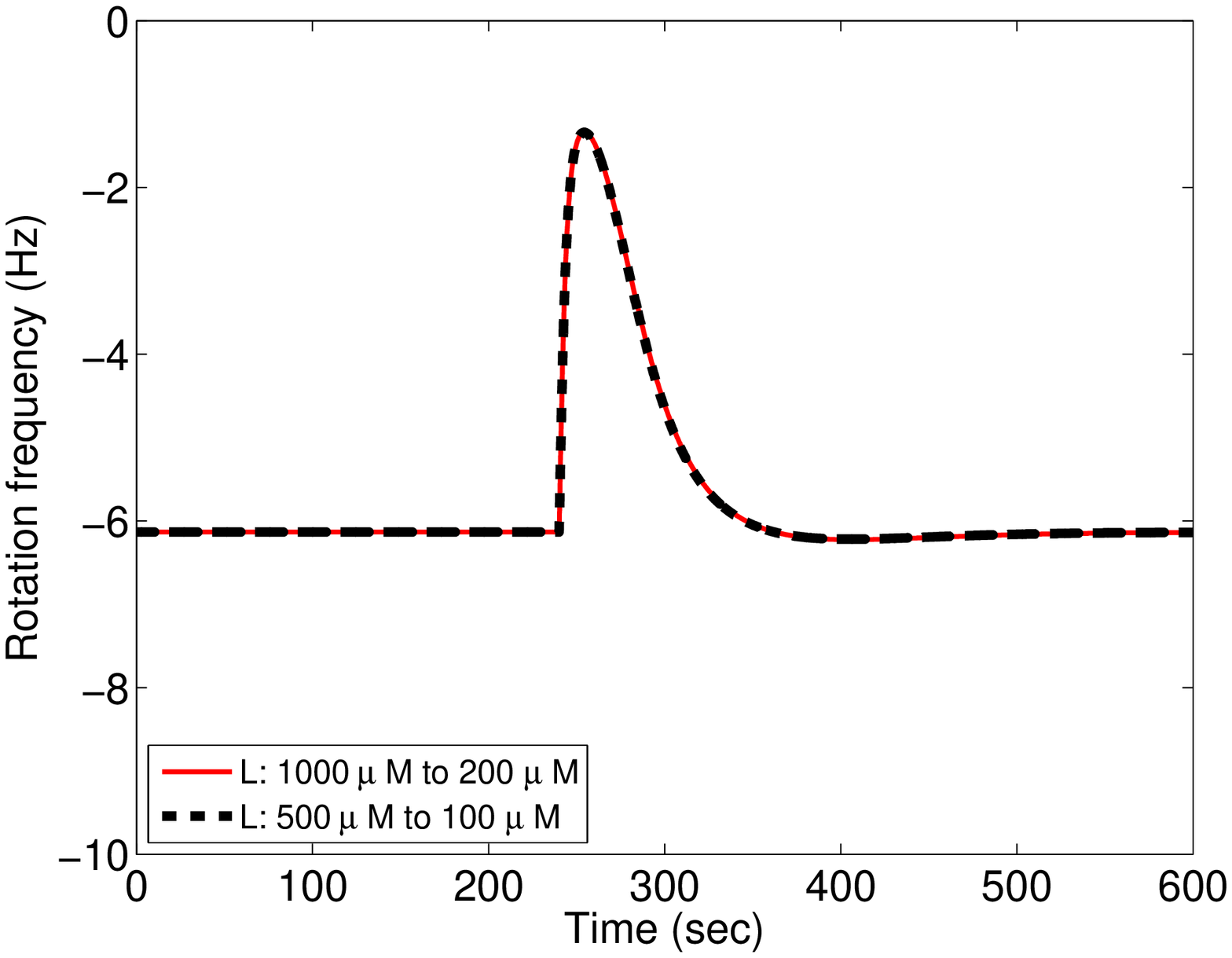}
\caption{Model I output in response to step changes in $L$ from
$L=1000 \mu$M to 200 $\mu$M and from $L=500 \mu$ M to $100
\mu$M}\label{Fig:model_I_fcd}\end{center}\end{figure}

\subsubsection{Model II}\label{Subsec:model II}


\noindent \textbf{Model structure:} Here, the model's internally
sensed ligands $\tilde{L}$ are related to $L$ via the differential
equation $\dot{\tilde{L}}=-\frac{1}{2}\tilde{L}+\frac{1}{2}L$.
Whilst the ligand concentrations $L,\tilde{L}$ modify receptor
activities, the cytoplasmic receptors are otherwise unregulated by
internal cell signals, and therefore $\tilde{L}$ is not a function
of $u$. A schematic is illustrated in Figure
\ref{Fig:model_II_schematic}, and a simulation of the model
together with the tethered cell trace to which the model was
fitted is shown in Figure \ref{Fig:model_II_assay}.  For
comparison, Figure \ref{Fig:model_II_assay} additionally shows a
simulation (with the same ligand input) of the model suggested in
\cite{plos}, which was fitted to the same tethered cell assay.
The root mean squared error between the output of Model II and the
tethered cell assay is 0.95, which, as with Model I, also compares
favorably to the corresponding error for the model in \cite{plos},
which is 1.27.


Note that if $L$ were to undergo a step change from $L=L_{a} \mu$M
to $L=L_{b} \mu$M and if the system is initially at steady state
(where $\tilde{L}(0)=L_{a}$), then $\tilde{L}$ would remain
confined to the set $[L_{a},L_{b})$. Therefore, for such a step
change, $K_{I} \ll L \ll K_{A}$ implies that $\tilde{K}_{I} \ll
\tilde{L} \ll \tilde{K}_{A}$.  Figure \ref{Fig:model_II_fcd} shows
that simulations of this model do show FCD in this input range, with similar
output traces obtained for the step changes in $L$ from $L=1000
\mu$M to 200 $\mu$M and from $L=500 \mu$ M to $100 \mu$M.

%

\noindent \textbf{Model parameters:}  $k_{R}=\tilde{k}_{R}=0.0057
\quad k_{B_{1}}=\tilde{k}_{B_{2}}=2.376 \quad k_{B_{2}}=2.970$.

\begin{figure}[h!]\begin{center}
\scalebox{0.5}{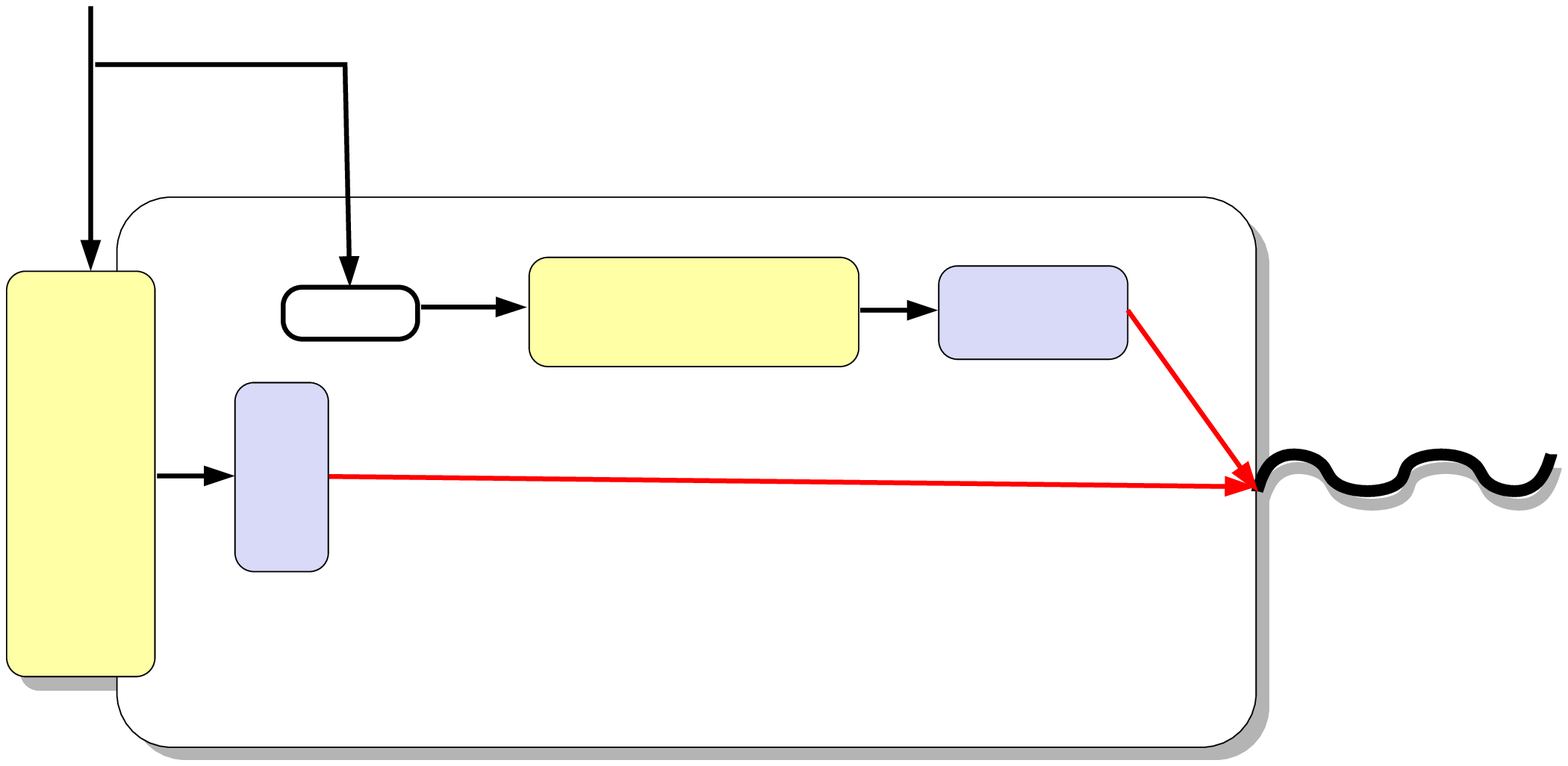}
\caption{Schematic of Model II.}
\label{Fig:model_II_schematic}\end{center}\end{figure}
\begin{figure}[h!]\begin{center}\includegraphics[width=0.7\textwidth]{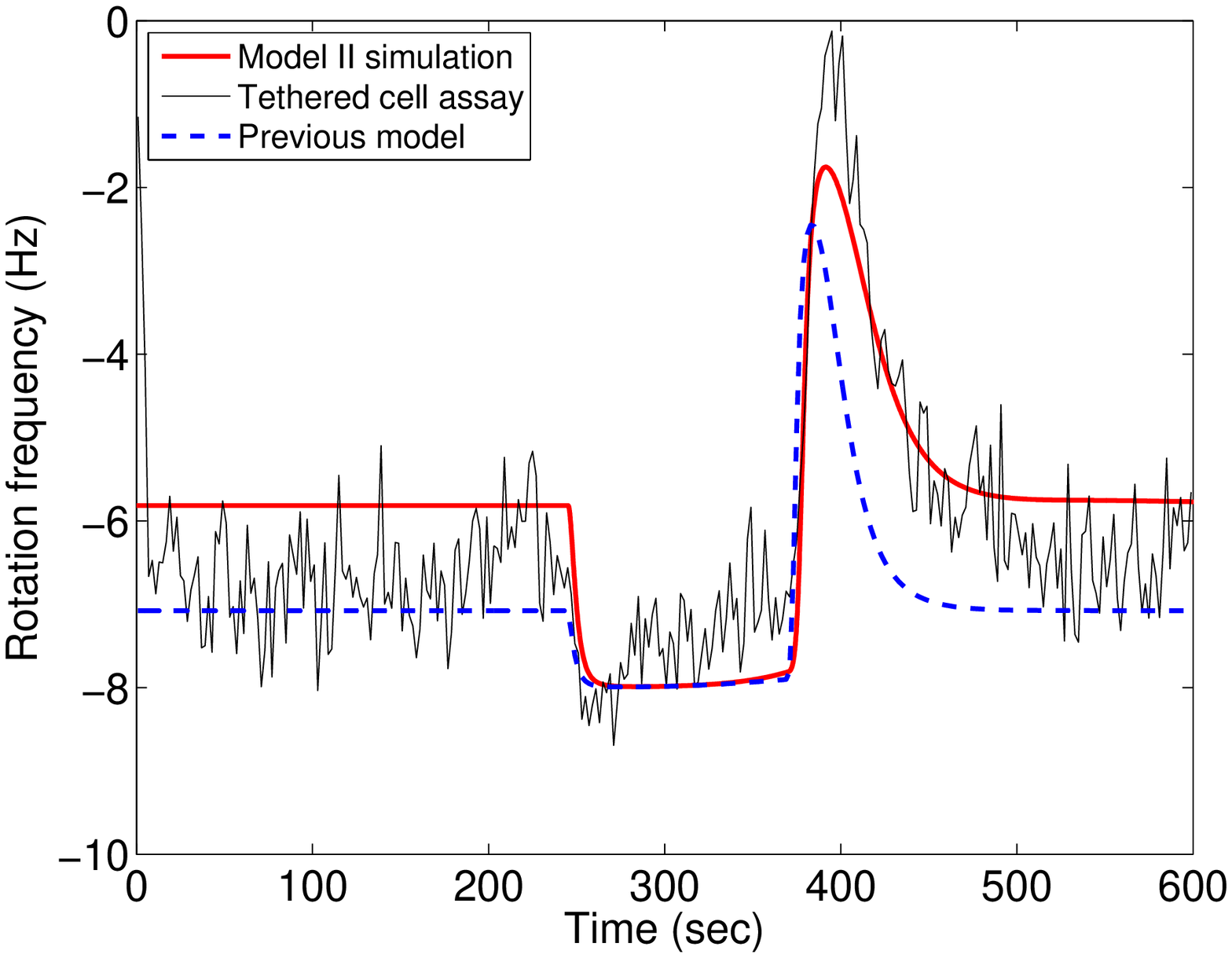}
\caption{Simulation of Model II (red) in response to a step rise
(at 245 seconds) and fall (at 370 seconds) in the ligand level $L$
from $L=0$ to $L=100$ and back to $L=0$, with a tethered cell
assay (black).  The dashed blue trace is a simulation of the
previously published model in \cite{plos} subject to the same
ligand input.}\label{Fig:model_II_assay}\end{center}\end{figure}

\begin{figure}[h!]\begin{center}\includegraphics[width=0.7\textwidth]{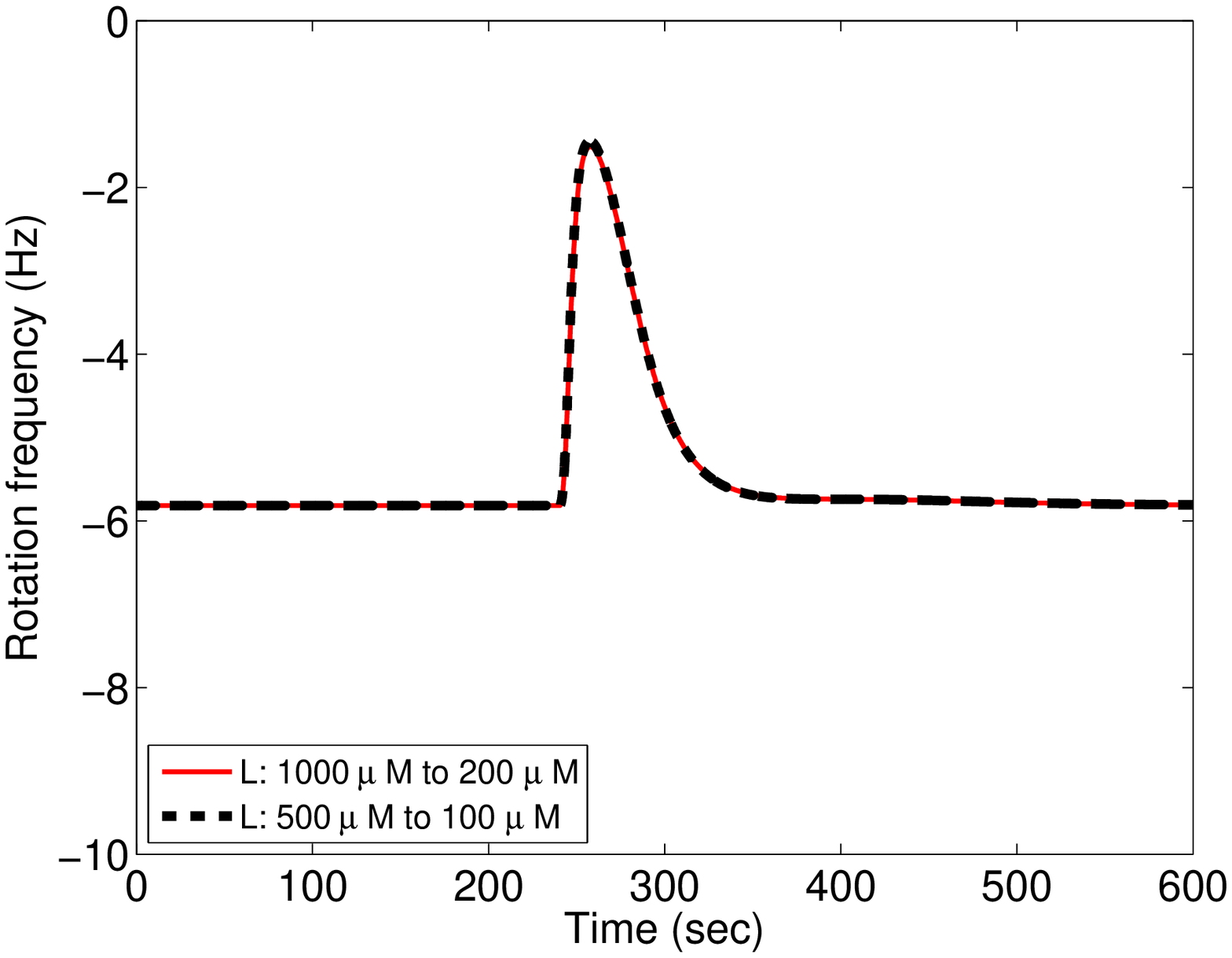}
\caption{Model II output in response to step changes in $L$ from
$L=1000 \mu$M to 200 $\mu$M and from $L=500 \mu$ M to $100
\mu$M}\label{Fig:model_II_fcd}\end{center}\end{figure}

\subsection{Future experiments for model invalidation}

The models presented above are two systems based on the assumptions of Section \ref{Sec:Assumptions} that reproduce the experimental data of \cite{plos}, but which additionally show FCD. By comparison, the model suggested in \cite{plos}, based on different receptor dynamics, does not exhibit FCD in response to the inputs used in the simulation in Figures \ref{Fig:model_I_fcd}, \ref{Fig:model_II_fcd} as shown in Figure \ref{Fig:model_plos_fcd}.

One important feature to note is that the FCD property is preserved
regardless of the exact dynamics in (\ref{Eq: signal transduction}) and regardless of the interactions between the receptors and the chemotaxis proteins, as long as the conditions of Theorem \ref{Thm: FCD} are satisfied.   Therefore, if the receptor dynamics model given in Section \ref{Sec:Assumptions} is accurate, then in the ligand concentration ranges $K_{I} \ll L \ll
K_{A}$ and $\tilde{K}_{I} \ll \tilde{L} \ll \tilde{K}_{A}$, FCD is a robust dynamic property of the chemotaxis system that should be observed in both wild type and in mutant strains of \emph{R. sphaeroides} that have chemotaxis protein deletions and over-expressions.

The above points suggest experiments in which the FCD dynamic phenotype
can be used to discriminate between Models I and II on the one hand, and the model suggested in \cite{plos} on the other:

\begin{itemize}
  \item If Models I and II are to invalidate that of \cite{plos} then the wild type bacterium, initially at steady state, should show near identical flagellar output responses to the step ligand inputs $L=1000 \mu$M to 200 $\mu$M and $L=500 \mu$ M to $100$.

\item Overexpressing the chemotaxis protein CheY$_{4}$ five fold was shown in \cite{BMC} to not destroy the chemotactic response of the bacterium.  Such a mutant strain should therefore, according to Models I and II, also exhibit FCD in response to a range of step changes in the external ligand concentration $L$. We can calculate this range for each of the two models as in Sections \ref{Subsec:model I} and \ref{Subsec:model II}. In Model I, the five-fold increase in CheY$_4$ means that FCD should be observed within the range $7.92 \tilde{K}_{I} \ll  L \ll
\tilde{K}_{A}$, whereas for Model II this range is $\tilde{K}_{I} \ll  L \ll
\tilde{K}_{A}$.

    \item If we define the adaptation time when the model is subject to a step decrease in ligand to be the time that it takes from the application of the step for the deviation of the flagellar rotation frequency from its steady-state value to fall to 25\% of its maximum, then under this definition, the adaptation times for Models I and II are 65 seconds and 62 seconds respectively, whereas for the model in \cite{plos}, the adaptation times are 266 seconds for the step ligand concentration decrease of 1000 $\mu$ M to 200 $\mu$ M, and 162 seconds for the step decrease of 500 $\mu$ M to 100 $\mu$M.  If Models I and II are to invalidate those in \cite{plos}, the experiments should yield approximately equal adaptation times in response to these two step changes in ligand concentration, and these adaptation times should be around 60 seconds.
\end{itemize}
Further model discrimination between Models I and II can be performed using the tools presented in \cite{BMC,plos}.

\begin{figure}[h!]\begin{center}\includegraphics[width=0.7\textwidth]{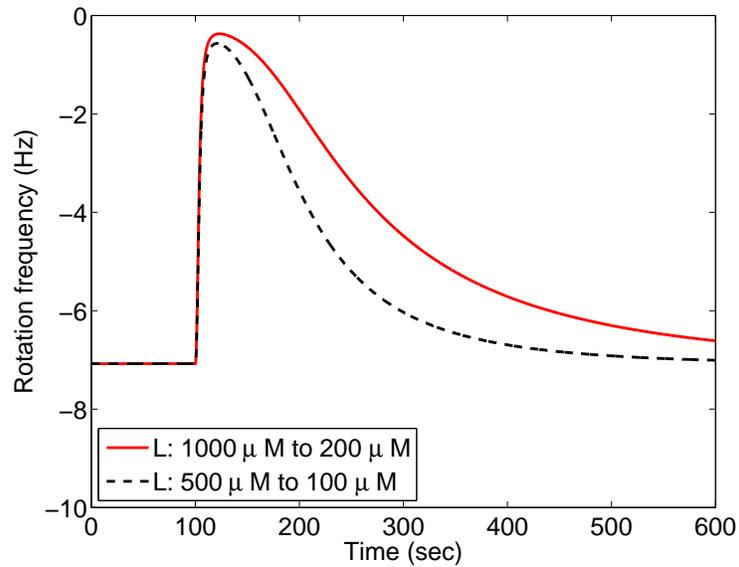}
\caption{Simulations of the model in \cite{plos}, subject to step
changes in $L$ from $L=1000 \mu$M to 200 $\mu$M and from $L=500
\mu$ M to $100
\mu$M}\label{Fig:model_plos_fcd}\end{center}\end{figure}

\section{Discussion}


The models presented herein differ from earlier \emph{R.
sphaeroides} chemotaxis models in two main respects: first, the
receptor dynamics are based on the MWC allosteric model.  This
model has been shown to be a fairly accurate representation of the
receptor dynamics in \emph{E. coli}.  The homologies between the
bacteria and the similarities between their overall chemotaxis
mechanisms give us reason to believe that the MWC model may, under
experimental testing, eventually prove to be a realistic way of
representing the \emph{R. sphaeroides} receptor dynamics.

The second point of departure of these models from earlier ones is
that the assumptions on the possible relationships between the
external and internal ligand concentrations are relaxed to admit
dynamic relations.  The motivation behind this model is to capture
any phase delays between sensed changes in the external ligand
concentration and the effect of such changes on the internal cell
environment.

The external-internal ligand relation of Model I closely follows
that of \cite{plos}. In effect, the activity of the cytoplasmic
cluster depends on the external ligand concentration, $L$, and,
indirectly, on the activity $a$ of the membrane cluster via the
phosphorylated chemotaxis proteins CheY$_3$-P and CheY$_4$-P, as
schematically illustrated in Figure \ref{Fig:model_I_schematic}.
On the other hand, the cytoplasmic receptor activity in Model II
does not depend on any chemotaxis proteins, and its sensed ligand
signals are merely phase-delayed versions of the external ligand
concentration.

As experimentally shown \cite{ArmitageNatureReview}, chemotaxis
requires CheY$_{6}$ \emph{and} one of either CheY$_{3}$ or
CheY$_{4}$, as deletion of either CheY$_{6}$ or both of
CheY$_{3_{p}}$ and CheY$_{4_{p}}$ destroys the chemotactic ability
of the bacterium.  In the models we present, this was captured by
the interaction of the three CheY proteins at the flagellum in
what is effectively an AND logic gate that will only be activated
if both CheY$_{6}$ and at least one of CheY$_{3_{p}}$ or
CheY$_{4_{p}}$ are present.  The signal transduction dynamics
(\cite{Porter,BMC}) show that CheY$_{3_{p}}$ and CheY$_{4_{p}}$
are solely phosphorylated by the membrane cluster, whereas
CheY$_{6}$ receives most of its phosphates from the cytoplasmic
cluster.  In essence, this structure means that there are
essentially two paths from the external ligands to the flagellum
that terminate at the AND gate: one path via the membrane cluster in
which CheY$_{3_{p}}$ and CheY$_{4_{p}}$ proteins convey the
signal, and one path via the cytoplasmic cluster, in which
CheY$_{6_{p}}$ conveys the signal.  This resembles a recurring
biochemical motif \cite{AlonBook}, and the selective advantage it bestows could be
improved \emph{energy taxis} \cite{ArmitageNatureReview} with
respect to simpler chemotaxis circuits such as that of \emph{E.
coli}.  The main feature of this improved pathway is that the
flagellar motion will only vary if both signalling paths from $L$
to the flagellum are activated.  Since the cytoplasmic cluster may
integrate un-modeled metabolic information from within the cell,
it would be important that any variation in flagellar activity
only results from a change in the metabolic state of the cell that
arises from a change in the local chemoeffector environment.  If
this is indeed the case then the signalling path from the
cytoplasmic cluster is only activated if the metabolic state of
the cell changes, whilst the signalling path from the membrane
cluster is only activated if the immediate chemical environment
changes.  Only if both are activated together would the cell
`know' that the change in its metabolic state is due to a change
in chemoeffector concentration, and only then would it change its
flagellar activity.


\subsection{Selective advantage of FCD}
Whether FCD bestows upon the bacterium a selective advantage or
simply arises as a by-product of the chemotaxis system's structure
is a question of interest.  The fact that FCD is present in
simpler chemotaxis circuits than that of \emph{R. sphaeroides}
(e.g. in \emph{E. coli}) suggests that the advantages gained by
having such a property would be independent of the complexity of
the bacterium's chemotaxis pathway. It may be that the metabolic
payoff to the bacterium of moving to more chemically favorable
regions depends on the relative chemical improvement in its
environment rather than the absolute change. A reason for this
could be that biasing its movement towards longer swims could be
metabolically costly for the bacterium, and moving in this way is
only worthwhile if the metabolic gain is significant.  A potential
disadvantage of FCD to the bacterium could be a high sensitivity
to small fluctuations in sensed ligand when the background ligand
concentration is low, due to the fact that the gain in the
flagellar rotation frequency would then be high. However, this
disadvantage is offset by the fact that FCD behavior only occurs
at background ligand concentrations significantly above a
threshold, given by $K_{I}$ in the models above.

\subsection{FCD as dynamic phenotype for model invalidation}

The models we have presented provide an example of how
dynamic phenotypes can be used to discriminate between competing
biochemical models.  Given two models of the same system, a
mathematical analysis can be used to identify regions in the
parameter and input spaces in which a certain qualitative dynamic
behavior, such as FCD, could be expected.  Ideally, this behavior
would be expected to be robust to any genetic mutations or
environmental conditions, and the conditions under which this
behavior would occur would be implementable experimentally. Model
discrimination can then be performed on the basis of whether or
not the system robustly reproduces the dynamic phenotype
experimentally. This differs from traditional forms of model
discrimination in that it can be used to discriminate between
different biological mechanisms, and can be used to identify
whether an observed phenomenon is due to the fine tuning of
biological parameters or due to a more fundamental structural
property of the system.

\end{document}